\newcommand{\1}{\mathbbm{1}} %{\rm 1\hspace{-0.9mm}l}}
\newtheorem{proposition}{Proposition}
\DeclareMathOperator{\id}{id}
\newcommand{\EE}{\mathcal{E}}
\begin{document}
\title{Pauli semigroups and unistochastic quantum channels}

\author{Zbigniew Pucha{\l}a}
\email{z.puchala@iitis.pl}
\affiliation{Institute of Theoretical and Applied Informatics, Polish Academy
of Sciences, ulica Ba{\l}tycka 5, 44-100 Gliwice, Poland}
\affiliation{Faculty of Physics, Astronomy and Applied Computer Science, 
Jagiellonian University, ul. {\L}ojasiewicza 11,  30-348 Krak{\'o}w, Poland}

\author{ \L ukasz Rudnicki} 
\affiliation{Max-Planck Institute for the Science of Light, Staudtstra{\ss}e 
2, 91058 Erlangen, Germany}
\affiliation{Center for Theoretical Physics, Polish Academy of Sciences, 
Al. Lotnik{\'o}w 32/46, 02-668 Warsaw, Poland}

\author{Karol {\.Z}yczkowski}
\affiliation{Faculty of Physics, Astronomy and Applied Computer Science, 
Jagiellonian University, ul. {\L}ojasiewicza 11,  30-348 Krak{\'o}w, Poland}
%\affiliation{Smoluchowski Institute of Physics, 
%Jagiellonian University, ul. {\L}ojasiewicza 11,  30-348 Krak{\'o}w, Poland}
\affiliation{Center for Theoretical Physics, Polish Academy of Sciences, 
Al. Lotnik{\'o}w 32/46, 02-668 Warsaw, Poland}

\date{March 29, 2019}

\begin{abstract}
We adopt the perspective of similarity equivalence, in gate set tomography called the gauge, to analyze various properties of quantum operations belonging 
to a semigroup, $\Phi= e^{{\cal L}t}$,
and therefore given through the Lindblad operator. We first observe that the non unital part of the channel decouples from the time evolution. Focusing on unital operations we restrict our attention to the single-qubit case, showing that the semigroup embedded inside the tetrahedron of Pauli channels
is bounded by the surface composed of product probability vectors
and includes the identity map together with the maximally depolarizing channel.
Consequently, every member of the Pauli semigroup
is unitarily equivalent to a unistochastic map, describing a coupling with one-qubit environment
initially in the maximally mixed state, determined by a unitary matrix of order four.
\end{abstract}

\pacs{03.65.Ta, 03.67.-a, 03.67.Ud}
\keywords{quantum operations,  Lindblad dynamics, quantum semigroups,
Pauli channels}

\maketitle

%%%%%%%%%%%%%%%%%%%%%%%%%%%%%%%%%%%%%%%
\section{Introduction}
%%%%%%%%%%%%%%%%%%%%%%%%%%%%%%%%%%%%%%%%%

Quantum systems interacting with an environment
or being subject to a quantum noise can be described
within the theory of open quantum systems \cite{BP02,RH11}.
One applies the notion of mixed quantum state
represented by a positive, normalized hermitian matrix \cite{BZ17},
which can be considered as a generalization of the classical
probability vector.
 
In the stroboscopic approach, the dynamics is represented in discrete time steps
and the formalism of quantum operations, often  called \textit{quantum channels}, becomes useful.
These completely positive and trace preserving linear maps 
 send the set of mixed quantum states of a given size $N$ into itself.
  The channel can be considered as a generalization of the unitary evolution of a density matrix,  which takes into account interactions of the system with an environment or with a measurement apparatus. 
 
Although in the simplest case of a single qubit system, $N=2$,  the set of 
quantum operations has 12 dimensions, its structure and geometry
 is already well understood \cite{FA99,RSW02,BGNP14}.
 This  contrasts  the case of a larger size $N$, 
 for which the set of quantum operations has 
  $N^4-N^2$ dimensions and its structure and geometry
 become difficult to grasp  \cite{SWZ08}.

In an alternative approach, one describes dynamics in continuous time.
The most general form of a quantum Markov process,
which preserves positivity of quantum states,
is given by the equation derived by
Gorini, Kossakowski and  Sudarshan \cite{GKS76},
and independently by Lindblad \cite{Li76}.
Its solution describes a (non)unitary quantum dynamics, 
$\rho(t)=  e^{{\cal L}t} [\rho(0)]$,
and is determined by the Lindblad generator ${\cal L}$ -- see e.g. \cite{BP02}. 
During the last four decades  an approach of quantum dynamical semigroups 
\cite{AL98} was successfully used in a broad variety of physical problems. 
For an interesting account on the history and importance of the GKLS equation
 consult a recent review \cite{CP17}.

It is natural to ask to which extent both alternative descriptions of quantum
dynamics
are compatible. In general, this is not a trivial question as it is well known
that some quantum channels are not {\sl divisible} \cite{WC08,WECC08},
hence they cannot be represented as concatenation of other channels.
Consequently, they do not belong to a semigroup. In spite of several recent
related contributions 
on the structure of quantum Markovian dynamics \cite{FPMZ17,MCPS17}
the problem how to describe the set  of quantum channels,
which belong to a semigroup, remains open even in the simplest case of
operations acting on a single qubit \cite{RFZ10}. Nevertheless, as quite a lot
is known about formal properties of channels ranging from the divisible up to
the Markovian ones \cite{WC08,DZP18}, we are not concerned here with the
characterization problem. Instead, we assume that a given channel $\Phi$ can be
a \textit{seed}  for a semigroup (see next section for a proper explanation) and
we examine construction and properties of associated Lindblad operator.

Our approach pursued in this paper is inspired by gate set tomography \cite{Robin1,Robin2,Greenbaum} being an efficient successor of process tomography. 
In this reconstruction scheme, quantum channels are obtained up to a similarity equivalence, customarily called the gauge, i.e. instead of $\Phi$ one recovers $X\Phi X^{-1}$, where $X$ is unknown. It was shown \cite{RPZ18} that $X$ itself is a reversible and trace preserving operation, though, it is not necessarily a channel since it does not need to be completely positive. 

On the one hand side, some information concerning the character of the discrete
evolution corresponding to a quantum operation can be obtained by investigating
spectral properties of the corresponding superoperator \cite{CMM17,RPZ18}. On
the other hand, the gauge symmetry inherent to gate set tomography promotes the
spectrum of the superopertor $\Phi$ 
to be  the only source of accessible knowledge. Therefore, the aim of
this work is to make first steps in learning about the structure of the Lindblad
generators related to quantum channels, while being guided by the similarity
relation. In Sec. II we introduce necessary notation, while in Sec. III we
initiate the general analysis of the problem. In Sec. IV we concentrate on
the case $N=2$ and establish some results on the geometric structure of the set $\cal S$
of one-qubit unital operations which belong to a semigroup.
In particular, we demonstrate that any Pauli channel is unitarily equivalent
to a unistochastic channel.
% We also build up a geometrical intuition establishing some properties of the
%set of Pauli channels which belong to a semigroup.
 
 \section{Preliminaries}
 \label{sec_prem}
%%%%%%%%%%%%%%%%%%%%%%%%%%%%%%%%%%%%%%%%%%%%%%%%%%%%%%%%%%%%%%%%%%%%%%%%%%%%%%%%
We consider a quantum channel $\EE$ acting on density matrices of order $N$. 
The action of the channel, defined in terms of the Kraus decomposition $K_j$
\begin{equation}
	\EE(\rho) = \sum_j K_j \rho K_j^\dagger,
	\label{kraus1}
\end{equation}
leads to the superoperator representation, $\Phi = \sum_j K_j \otimes {\bar 
K_j}$. Given any hermitian operator basis $\{B_0, \dots , B_{N^2-1} \}$, 
such that $B_0 =\1 / \sqrt{N}$ and $\tr B_i B_j = \delta_{ij}$, 
the  superoperator further acquires the block form
\begin{equation}
\Phi = \begin{bmatrix}
1 & 0 \\
\kappa & T
\end{bmatrix},
\label{PhiT}
\end{equation}
provided that we assume the channel $\EE$ is trace preserving. Up to unitary 
rotations in  $N^2 - 1$ dimensions, such a basis is formed by the generalized 
Pauli matrices. The real distortion matrix $T$ of order $N^2-1$ acts on the generalized 
Bloch vector representing a quantum mixed state in the Bloch representation, 
while the real vector $\kappa$ accounts for the displacement of the entire set 
of quantum states. In the case of unital maps, which preserve the maximally 
mixed state, $\Phi({\mathbbm 1}/N)= {\mathbbm 1}/N$, the translation vector 
vanishes,  $\kappa=0$. In the case of a unitary map, the matrix $T$ is 
orthogonal.

A matrix $\Phi$, acting on a composite Hilbert space ${\cal H}_N \otimes {\cal 
H}_N$,  can alternatively be represented in a product basis, $\ket{m} \otimes 
\ket{\mu}$ with matrix element written $ \Phi_{\stackrel{\scriptstyle m 
\mu}{n\nu}} = \bra{m \mu} \Phi \ket{n  \nu}.$  In general, the superoperator 
matrix $\Phi$ is non Hermitian. However, by reshuffling of its four indices
one obtains a Hermitian matrix $D_{\Phi}=\Phi^R$, where 
$X_{\stackrel{\scriptstyle m \mu}{n\nu}}^R =X_{\stackrel{\scriptstyle m n}{\mu 
\nu}}$  -- see~\cite{ZB04}. The matrix $D_{\Phi}$ is called dynamical matrix or 
the Choi matrix, as the theorem of Choi \cite{Ch75} implies that the map $\Phi$ 
is completely positive if and only if  the corresponding Choi matrix $D_{\Phi}$ 
is positive.

For every quantum operation  $\EE$  its corresponding superoperator $\Phi$ of 
order $N^2$ enjoys the following spectral properties \cite{BZ17}:
a) $\lambda_0 (\Phi) = 1$, due to trace preservation;
b) other $\lambda_i (\Phi), i = 1, . . . , N^2 - 1$ are in general complex, 
though they are either real or come in conjugate pairs, 
so that $\det \Phi$ and $\tr \Phi$ are real;
c)  All eigenvalues belong to the unit disk, $|\lambda_k| \leq 1$.

Although the spectrum of the superoperator can be complex,
the case $N=2$ is somewhat special:
For any one-qubit channel $\Phi$ there exists a unitarily equivalent operation
${\tilde \Phi}$ with real spectrum and diagonal distortion matrix,
\begin{equation}
\label{UV_equiv}
{\tilde \Phi}= \Psi_U \circ \Phi \circ \Psi_V=
(U\otimes {\bar U}) \Phi (V \otimes {\bar V}).
\end{equation}
To show this one uses a known group--theoretical  homomorphism, 
$SU(2) \simeq SO(3)$, which allows to represent a unitary transformation
of a complex one--qubit state by an orthogonal proper rotation of the corresponding
Bloch vector of length three  \cite{BGNP14}.
This implies the transformation $T\to {\tilde T}=O_U T O_V$,
where $O_U$ and $O_V$ denote orthogonal matrices of order three, 
which are determined by unitary rotation matrices of size two, $U$ and $V$ respectively.
An analogous formula with usage of arbitrary orthogonal matrices corresponds to singular
value decomposition of $T$ with positive singular values. 
Since in the case considered, orthogonal matrices do belong to $SO(3)$ and have unit determinant,
the transformed matrix is diagonal and real, but may contain also negative entries,
${\tilde T}={\rm diag}(\lambda_1, \lambda_2, \lambda_3)$.
%In agebraic terms, if spectrum of a normal  matrix $T$ is real,
%its eigenvalues and singular values coincide up to the signs.
The vector $\vec \lambda$ describes the distortion of the Bloch ball  
induced by the operation \cite{BZ17} and forms the spectrum of 
the operation ${\tilde \Phi}$.
%Thus in the rest of this work we shall restrict our attention
%to superoperators $\Phi$ with real spectrum.

\medskip
Let us now proceed to the description of quantum dynamics in continuous time. 
Every quantum Markov evolution can be determined by a linear Lindblad 
superoperator ${\cal L}$,
\begin{equation}\label{semig}
\rho(t)=  e^{{\cal L}t} [\rho(0)] =\Lambda_t [\rho(0)],
\end{equation}
which generates a semigroup, $\Lambda_s \Lambda_t = \Lambda_{t+s}$. According 
to the celebrated  GKLS theory \cite{GKS76,Li76}, the action of any Lindblad 
generator ${\cal L}$ can be written in terms of no more than $(N^2-1)$ jump 
operators $L_j$, 
\begin{equation}
{\cal L} (\rho)  = 
%% -i [H,\rho] +  NO HERMITIAN PART HERE !
\sum_{j=1}^{N^2-1} %\gamma_j
\Bigl( L_j \rho L_j^{\dagger} -\frac{1}{2} L_j^{\dagger} L_j \rho
-\frac{1}{2} \rho L_j^{\dagger} L_j \Bigr).
\label{Lind2}
\end{equation}
A product of any three matrices, $Y=ABC$, can also be written as $Y=\Psi B$, 
where the superoperator reads $\Psi=A\otimes  C^T$ and $B$ is transformed to a 
vectorized form. Thus, the Lindblad generator can be explicitly represented 
by a matrix of order $N^2$,
\begin{equation} \label{Lind3}
	{\cal L} = 
	\sum_j L_j \otimes \overline{L}_j
	- \frac12 \sum_j L_j^\dagger L_j \otimes \1
	- \frac12 \sum_j \1 \otimes L_j^T  \overline{L}_j.
\end{equation}

Let us now merge both pictures by setting $\Phi=\Lambda_1$, so that ${\cal L}=\log\Phi$. We assume that $\Phi$ fulfills all the requirements to be a seed for the semigroup, what means that  $\Lambda_t=e^{t \log\Phi}$ gives the proper quantum channel for all $t\geq0$. Under a similarity transformation $\Phi \mapsto X\Phi X^{-1}$, we get
\begin{equation}
	{\cal L}\mapsto X \log \Phi X^{-1},\qquad \Lambda_t \mapsto Xe^{t \log\Phi} 
	X^{-1}.
\end{equation}
Clearly, the similarity transformation might spoil complete positivity of both the seed channel and the semigroup, however, it does not mingle with the time evolution.

%Note also that working with a non-hermitian superoperator matrix $\Phi$
%it is convenient to use in parallel is singular value decomposition 
%and its diagonal form $\Lambda= X \Phi X^{-1}$.

 %To emphasize it again, the former issue is not the one with which we are here concerned. Instead, we are interested in the second feature.

\section{General channels}
First of all, we consider the general channel given by Eq. \eqref{PhiT} acting 
on a system of size $N$. We observe that, as long as the matrix $(T-\1)$ is 
invertible, the following decomposition holds:
\begin{equation}\label{Phikappa}
	\Phi=\left[\begin{array}{cc}
	1 & 0\\
	\kappa & T
	\end{array}\right]=\left[\begin{array}{cc}
	1 & 0\\
	\left(\1-T\right)^{-1}\kappa & \1
	\end{array}\right]\left[\begin{array}{cc}
	1 & 0\\
	0 & T
	\end{array}\right]\left[\begin{array}{cc}
	1 & 0\\
	\left(T-\1\right)^{-1}\kappa & \1
	\end{array}\right],
\end{equation}
and the last matrix on the right hand side is the inverse of the first one 
therein. Since Eq. \eqref{Phikappa} constitutes the similarity relation, we 
immediately conclude that the non unital contribution from $\kappa$ does not 
complicate the time evolution. Moreover, since the logarithm of the unital part 
in the middle term above is equal to  $\mathrm{diag}\left(0,\log T\right)$, the 
multiplication from the left will trivialize leading to the results:

\begin{equation}
	{\cal L}=\log \Phi =\left[\begin{array}{cc}
	1 & 0\\
	\left(\1-T\right)^{-1}\kappa & \1
	\end{array}\right]\left[\begin{array}{cc}
	0 & 0\\
	0 & \log T
	\end{array}\right]\left[\begin{array}{cc}
	1 & 0\\
	\left(T-\1\right)^{-1}\kappa & \1
	\end{array}\right]\equiv \left[\begin{array}{cc}
	0 & 0\\
	0 & \log T
	\end{array}\right]
	\begin{bmatrix}
	1 & 0 \\
	(T-\1)^{-1} \kappa & \1
	\end{bmatrix},
\end{equation}
and
\begin{equation}
	\Lambda_t = \left[\begin{array}{cc}
	1 & 0\\
	\left(\1-T\right)^{-1}\kappa & \1
	\end{array}\right]\left[\begin{array}{cc}
	1 & 0\\
	0 & T^t
	\end{array}\right]\left[\begin{array}{cc}
	1 & 0\\
	\left(T-\1\right)^{-1}\kappa & \1
	\end{array}\right]\equiv
	\begin{bmatrix}
	1 & 0 \\
	(T^t-\1)(T-\1)^{-1} \kappa & T^t
	\end{bmatrix}.
\end{equation}
The above simple and at the same time general formula, if $T$ is diagonal, could be intuitively 
rewritten in terms of the $N^2-1$ non-trivial eigenvalues $\lambda_i 
(\Phi)$, which for diagonal distortion matrices are real. If an \textit{a priori} given channel $\Phi = 
\Phi(\lambda_i,\kappa_i)$, with $i=1,\ldots,N^2-1$ is an admissible seed to a 
semigroup, then the following relation holds for an arbitrary dimension $N$ 
\begin{equation}
	\Lambda_t = 
	\Phi\left(\lambda_i^t, \frac{1-\lambda_i^t}{1-\lambda_i}\kappa_i\right).
	\label{kappan}
\end{equation}
Obviously, for $t=1$ this formula yields the seed quantum operation $\Phi$.

The general question, %we are not concerned here about, namely 
whether a given channel (\ref{kraus1}) defines a proper semigroup, 
% is in a general case intractably complex. 
is beyond the scope of this work.
Nevertheless, we shall briefly scrutinize few immediate restrictions. 
First of all, referring now to the Bloch form (\ref{PhiT}) of
the superoperator, we see that if all real eigenvalues of $\Phi$ are positive
the desired expression  $\log T$ can be defined.
%existence of $\log T$ requires that all
% % % % % % %
%{\bf \color{red} editorial: nondegenerate} ??
% real eigenvalues of $\Phi$ are positive.
%
 Moreover, existence of $(T-\1)^{-1}$ forces that all eigenvalues 
(except $\lambda_0$), if real, are strictly smaller than 1. For a qubit case, 
$N=2$, 
% %we have recently shown 
% % to bylo znane wczesniej !
it is known that \cite{RPZ18}
\begin{equation}
\| \kappa \|^2 \leq 1 - |\lambda_1|^2 - |\lambda_2|^2 - |\lambda_3|^2 +
2 \lambda_2 \lambda_2 \lambda_3,
\end{equation}
is a necessary condition for complete positivity of $\Phi$. Note that it stems from a related condition expressed through singular values of $T$ \cite{BGNP14}.
From this result we easily infer, that if $ \kappa \neq 0$, then
\begin{equation}\label{eqn:max-lambda}
\max\{|\lambda_1|,|\lambda_2|,|\lambda_3|\}  < 1.
\end{equation}
so that the second requirement is always satisfied. To prove the assertion, 
without loss of generality, we assume that max 
$\{|\lambda_1| , |\lambda_2| , |\lambda_3|\} = |\lambda_1|$.
Then if $|\lambda_1| = 1$ we obtain
\begin{equation}
\|\kappa\|^2 \leq - |\lambda_2|^2 - |\lambda_3|^2 + 2 |\lambda_2 \lambda_3| \ \leq \ 0,
\end{equation}
which enforces the contradiction $\kappa=0$. The last inequality used is arithmetic-geometric. A similar property could potentially hold for $N>2$, however, further studies are needed to substantiate that hope.

%%%%%%%%%%%%%%%%%%%%%%%%%%%%%%%%%%%%%%%%%%%%%%%%%%%%%%%%%%%%%%%%%%%%%%%%%%%%%%%%
\section{Qubit channels}
%%%%%%%%%%%%%%%%%%%%%%%%%%%%%%%%%%%%%%%%%%%%%%%%%%%%%%%%%%%%%%%%%%%%%%%%%%%%%%%%
In the second part we consider the simplest but fairly non-trivial one-qubit problem, $N = 2$. 
%Since above we have already addressed the non-unital contribution, we now focus on the unital case,  
In this case the three eigenvalues of $T$ are either all real or of the form: 
$\lambda_1 (\Phi) = x \in \mathbb{R}$ and $\lambda_2(\Phi) = z, \lambda_3(\Phi) 
= {\bar z} $ with $z \in \mathbb{C}$. In what follows we concentrate on the 
first case, leaving the second one as an open problem for the future. 

Above, we have already shown that every channel (also beyond the qubit case) is 
similar to its own unital variant, obtained by letting $\kappa\rightarrow0$. 
Moreover, every unital qubit channel with four real eigenvalues 
($\lambda_0\equiv 1,\lambda_1, \lambda_2, \lambda_3$) is similar (See Eq. (40) in
 ref. \cite{RPZ18}) to the channel
\begin{equation}\label{Xi}
	\Xi= \sum_{i=0}^3 p_i \sigma_i \otimes {\bar \sigma_i}
	=\left(
	\begin{smallmatrix}
	p_0 + p_3 & 0  & 0  &  p_1 + p_2\\
	0 & p_0-p_3  & p_1 - p_2  & 0\\
	0 & p_1-p_2  & p_0 - p_3  & 0\\
	p_1 + p_2 & 0  & 0  & p_0 + p_3
	\end{smallmatrix}
	\right).
\end{equation}
Note that, for further convenience, $\Xi$ is given in the product basis (not in the Pauli basis) so it is not of the form (\ref{PhiT}). However, for the sake of the spectrum, the choice of the basis is of no relevance. The eigenvalues are related to the probabilities $p_0,\ldots,p_3$ in the following way:
\begin{equation} \label{lambp}
	\begin{split}
	\lambda_0 = p_0+p_1+p_2+p_3 &\equiv 1, \\
	\lambda_1 = p_0+p_1-p_2-p_3 &= 1-2(p_2+p_3), \\
	\lambda_2 = p_0-p_1+p_2-p_3 &= 1-2(p_1+p_3), \\
	\lambda_3 = p_0-p_1-p_2+p_3 &= 1-2(p_1+p_2).
	\end{split}
\end{equation}

In other words, every single-qubit unital map is similar to a Pauli channel  
--- mixed unitary operation, defined as convex combination of rotations with 
respect to the Pauli matrices, $\Xi : \rho \mapsto \sum_{i=0}^3  p_i \sigma_i 
\rho \sigma_i^\dagger$, with $\sigma_0 = \id, \sigma_1 = \sigma_x, \sigma_2 = 
\sigma_y, \sigma_3 = \sigma_z$. For such a channel the Kraus operators can be 
chosen as $K_i = \sqrt{p_i} \sigma_i$. In fact, taking into account the results 
presented in the previous section, every qubit channel is similar to $\Xi$, 
with the similarity transformation $X$ being given by concatenation of  the 
transformation from Eq. (\ref{Phikappa}) and the rotation which brings the 
intermediate unital channel to the form (\ref{Xi}).

Quite naturally, every channel (non unital and of any dimension) is also 
similar to  ${\rm diag}(1,\ \lambda_1,\ldots,\lambda_{N^2-1})$, so that from 
this fundamental perspective, the time evolution does only depend on the 
eigenvalues. However, the observations made so far point towards the less 
obvious approach, in which for qubits, the burden and actual complexity are 
both to be lifted to the level of the Pauli channels. Therefore, in the next 
four subsections we discuss various semigroup-related properties of this 
special class of channels.

\subsection{The Pauli semigroup}
It is easy to check that the probabilities defining the Pauli channel are 
non-negative whenever $\lambda^{\downarrow}_{3} \geq \lambda^{\downarrow}_{1} + 
\lambda^{\downarrow}_2 -1$, where $\lambda^{\downarrow}$ denotes the vector 
$(\lambda_1,\lambda_2,\lambda_3)$ ordered decreasingly. In fact, this is a 
necessary and sufficient condition for complete positivity of this channel, 
derived in terms of the eigenvalues \cite{RPZ18}. Moreover, if we employ the 
condition $\lambda_{\min} \geq 0$, which holds whenever $p_i + p_j \leq 
\frac12$ for $i,j\in \{1,2,3\}; i \neq j$, the logarithm of $\Xi $ is 
well-defined. These conditions imply that $p_0$ is the largest component of 
vector $p$.

Let us introduce an orthogonal matrix $O_4$, an explicit form of which is of no 
relevance here, that allows to diagonalize the superoperator in question,
 $E=O_4 \Xi O_4^\top  = {\rm diag}(1,\lambda_1,\lambda_2,\lambda_3)$. Under all the conditions listed above, the Lindblad generator is
\begin{equation}
	{\cal L} = O_4^\top \log E \ O_4.
\end{equation}
The associated semigroup is then given by
\begin{equation}
	\Lambda_t = e^{{\cal L} t} =  O_4^\top e^{t \log E}  O_4 = 
	\frac12 \left(
	\begin{smallmatrix}
	1+\lambda_3^t & 0 & 0 & 1-\lambda_3^t\\
	0 & \lambda_1^t+\lambda_2^t & \lambda_1^t-\lambda_2^t & 0\\
	0 & \lambda_1^t-\lambda_2^t & \lambda_1^t+\lambda_2^t & 0\\
	1-\lambda_3^t & 0 & 0 & 1+\lambda_3^t
	\end{smallmatrix}
	\right).
\end{equation}
In order to check if this is a completely positive operation for all $t\geq 0$, 
we construct the corresponding Choi matrix
\begin{equation}
	D_{\Lambda_t}=\Lambda_t^R =
	\frac12 \left(
	\begin{smallmatrix}
	1+\lambda_3^t & 0 & 0 & \lambda_1^t+\lambda_2^t\\
	0 & 1-\lambda_3^t & \lambda_1^t-\lambda_2^t & 0\\
	0 & \lambda_1^t-\lambda_2^t &  1-\lambda_3^t& 0\\
	\lambda_1^t+\lambda_2^t & 0 & 0 & 1+\lambda_3^t
	\end{smallmatrix}
	\right),
\end{equation}
and study its positivity. As recalled in Section \ref{sec_prem},
for any four-index matrix $Y$ the symbol
 $Y^R$ denotes the matrix with reshuffled indices \cite{ZB04};
such an  involution 
transforms a non-hermitian superoperator matrix $\Phi$ into 
the hermitian dynamical 
matrix $D_\Phi$. The above matrix is positive iff the following three relations 
are satisfied for all times $t\ge 0$:
\begin{equation}
	\begin{split}
	1+\lambda_3^t &\geq \lambda_1^t+\lambda_2^t, \\
	1+\lambda_2^t &\geq \lambda_1^t+\lambda_3^t, \\
	1+\lambda_1^t &\geq \lambda_2^t+\lambda_3^t. 
	\end{split}
\end{equation}
Expanding these equations in power series around $t=0$ we see that it is 
sufficient to check the inequalities for small $t>0$ -- see also  Eq. (51) in 
the work of  Wolf and Cirac \cite{WC08}.
%%%%%%%%%%%%%%%
%[add justification from \L{}R notes, why it is enough 
%to consider the case $t \sim 0$.]
% LUKASZ - yes - please do :)
%%%%%%%%%%%%%%%%%%%%%%%
This gives three independent conditions  on the eigenvalues of the Bloch 
transition matrix $T$ entering Eq. (\ref{PhiT}), which need to be fulfilled to 
assure positivity of the Choi matrix:
\begin{equation} 
\label{lambda}
	\lambda_3 \geq \lambda_1 \lambda_2, \ \ \ \
	\lambda_2 \geq \lambda_1 \lambda_3, \ \ \ \
	\lambda_1 \geq \lambda_2 \lambda_3.
\end{equation}

%\begin{equation}
%\begin{split}
%\lambda_3 &\geq \lambda_1 \lambda_2, \\
%\lambda_2 &\geq \lambda_1 \lambda_3, \\
%\lambda_1 &\geq \lambda_2 \lambda_3.
%\end{split}
%\label{lambda}
%\end{equation}

\subsection{The geometric picture}
As we shall see below, each of the conditions (\ref{lambda}) determines a 
surface inside the tetrahedron of the Pauli channels and forms a part of the 
boundary of the  set $\cal S$ of operations belonging to a semigroup. For any 
point outside the set $\cal S$ one can try to find a continuous trajectory 
starting at identity, but for some finite time $t$ it will leave the 
tetrahedron of completely positive maps --- see Fig. 1.

\begin{figure}[h!]
\centering
\includegraphics[width=0.41\linewidth]{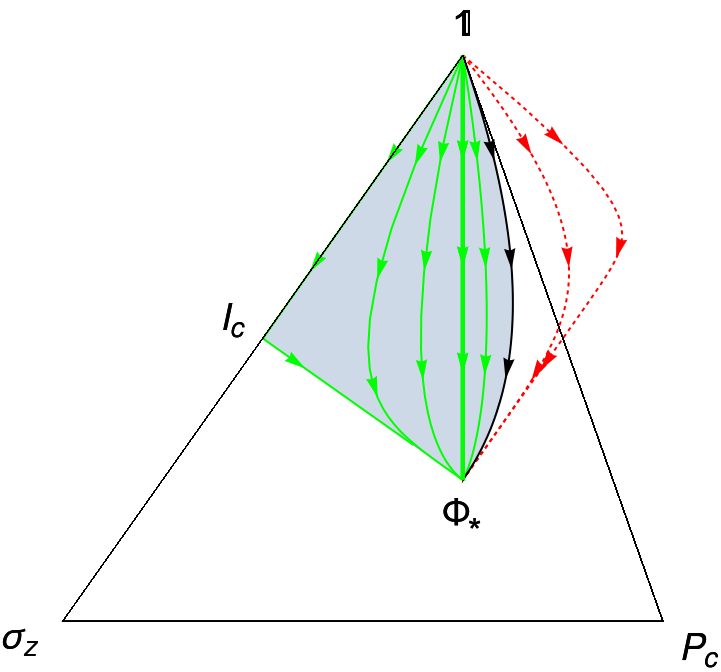}
\caption{Cross-section of the simplex of one--qubit Pauli channels 
 determined by the identity map $\mathbbm 1$, the completely depolarizing channel  
$\Phi_*$, and the $z$-rotation corresponding to the map 
$\Phi_z=\sigma_z\otimes\sigma_z$
which includes also classical channels:
$I_c=({\mathbbm 1}+\Phi_z)/2={\rm diag}(1,0,0,1)$ and
$P_c=(\Phi_x+\Phi_y)/2={\rm diag}(0,1,1,0)$.
The set $\cal S$ of channels belonging to a semigroup
is shown in gray, and its right boundary corresponds to the product relation,
$p_0p_3=p_1p_2$.  %{\bf please check!}
Some exemplary semigroups leading from $\mathbbm 1$ to $\Phi_*$, 
 are represented by solid arrowed lines. Dashed (red) lines 
 going through a channel $\Psi \notin {\cal S}$
does not correspond to a semigroup,  as 
 it leaves the simplex of CP maps for some initial time $t>0$.}
\label{fig:channels-section}
\end{figure}

The relations (\ref{lambp}) allow us to translate the inequalities 
(\ref{lambda}) concerning eigenvalues $\lambda_i$ into constraints for 
components of the probability vector:
\begin{equation}\label{eqn:probab-bounds}
	p_0 p_3 \geq  p_1 p_2, \ \ \ \
	p_0 p_2 \geq  p_1 p_3, \ \ \ \
	p_0 p_1 \geq  p_2 p_3.
\end{equation}
Boundary of the set $\cal S$ is met, whenever one of these inequalities becomes 
saturated. This happens if the classical probability vector of length four
has a product structure. For example, the choice
\begin{equation}\label{prod}
	{\vec p}=(a,a') \times (b,b')=(ab,ab',a'b,a'b'),
\end{equation}
where $a'=1-a$ and $b'=1-b$, renders $p_0p_3=aba'b'=p_1p_2$. Other two equalities would correspond to permutations of the components of ${\vec p}$.

Interestingly, these product vectors form three fragments of a hyperboloid --
a ruled surface inside the tetrahedron which is spanned by its two opposite 
sides. This manifold  is useful to visualize the set of  two-qubit separable 
states in a 3D cross-section through the 6D space of two-qubit pure states
\cite{BBZ02,BZ17} and to identify the maximally entangled states, which are 
located as far from this manifold as possible. 

The region  $\cal S$ for which constraints (\ref{eqn:probab-bounds}) are 
satisfied contains two special points: $(1,0,0,0)$, corresponding to identity 
channel, and $(1,1,1,1)/4$, representing the maximally depolarizing channel 
$\Phi_*$. The region in question is also bounded by three surfaces consisting 
of product probability vectors --- see 
Fig.~\ref{fig:channels-eigenvalues-semigroup}. Below we prove a proposition concerned with the shape of $S$, which stems from the product structure mentioned in the previous sentence.

%The boundary obtained in the previous section in eq.~\eqref{eqn:probab-bounds} 
%Is depicted in figure~\ref{fig:channels-eigenvalues-semigroup}. This is one 
%quarter of a set of classical product vectors, i.e.
%\begin{equation}\label{key}
%\bigcup_{\pi \in S_4} \{\pi \left( (p,1-p) \otimes (q,1-q)\right) \}.
%\end{equation}

\begin{figure}[h!]
\centering
\includegraphics[width=0.47\linewidth]{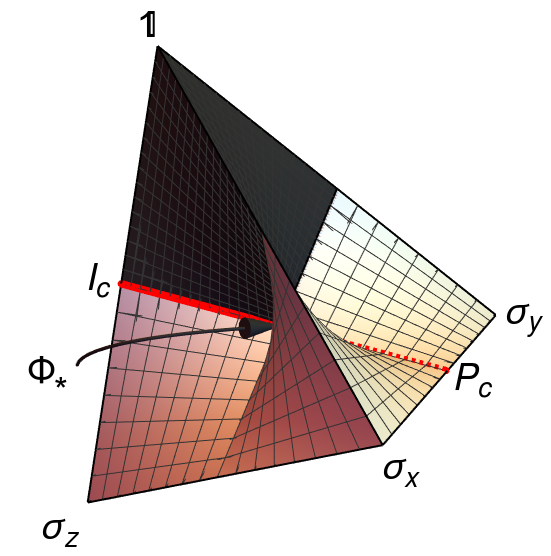}
\caption{
	The tetrahedron of  Pauli channels spanned by three Pauli matrices and 
	identity contains a proper set of maps bounded by the classical 4-point 
	probability vectors with a product structure. Dark region, including 
	identity map $\mathbbm 1$ and the completely depolarizing channel  
	$\Phi_*$, represents probability vectors for which the corresponding Pauli 
	channel $\Phi_{\vec p}$ belongs to a semigroup. Dashed (red) line 
	represents maps corresponding to classical action of bistochastic matrices 
	of size $N=2$, while solid line denotes maps from the classical semigroup 
	$[I_c, \Phi_*]$. The map $P_c$ represents negation of the classical state. 	
}
\label{fig:channels-eigenvalues-semigroup}
\end{figure}

\begin{proposition}
The set $\cal S$ is star-shaped with respect to any point on the interval 
\begin{equation}\label{eqn:star-shaped-interval}
\beta  (1,0,0,0) + (1-\beta)  \frac14(1,1,1,1), \ \text{ for } \ \beta \in 
[0,1].
\end{equation}
\end{proposition}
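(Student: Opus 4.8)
The plan is to show that the segment described in \eqref{eqn:star-shaped-interval} lies in $\cal S$, and then that for every $\vec p \in \cal S$ and every $\beta$-point $q$ on that segment the whole chord $[q,\vec p]$ stays inside $\cal S$. Since $\cal S$ is cut out of the probability simplex by the three inequalities in \eqref{eqn:probab-bounds}, and the simplex itself is convex, it is enough to check that each of the three functions
\begin{equation}\label{eqn:fij}
f_{ij}(\vec p) = p_0 p_i - p_j p_k, \qquad \{i,j,k\}=\{1,2,3\},
\end{equation}
stays non-negative along such a chord. So the real content is a statement about the three quadratic forms $f_{ij}$ restricted to the segment from a point $q$ on the axis to an arbitrary point of $\cal S$.

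First I would parametrize the axis point as $q = (1-3s,\,s,\,s,\,s)$ with $s=(1-\beta)/4\in[0,1/4]$, note $f_{ij}(q) = (1-3s)s - s^2 = s(1-4s)\ge 0$, so the axis segment itself lies in $\cal S$ (it contains $\mathbbm 1$ at $s=0$ and $\Phi_*$ at $s=1/4$, consistent with the text). Next, take any $\vec p\in \cal S$, so $f_{ij}(\vec p)\ge 0$ for all three index choices, write the chord as $\vec r(\tau) = (1-\tau) q + \tau \vec p$ for $\tau\in[0,1]$, and expand $g_{ij}(\tau) := f_{ij}(\vec r(\tau))$. Because $f_{ij}$ is a quadratic form, $g_{ij}$ is a quadratic polynomial in $\tau$ with $g_{ij}(0) = f_{ij}(q)\ge 0$ and $g_{ij}(1) = f_{ij}(\vec p)\ge 0$; a quadratic non-negative at both endpoints of $[0,1]$ can only dip below zero in the middle if its leading coefficient is negative and its minimum lies strictly inside $(0,1)$. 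The leading coefficient is exactly $f_{ij}(\vec p - q)$ evaluated as the quadratic form on the difference vector (no linear part, since $f_{ij}$ is homogeneous of degree two), so the whole problem reduces to controlling the sign of the bilinear "cross term" $B_{ij}(q,\vec p-q)$, where $B_{ij}$ is the polarization of $f_{ij}$.

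The key computation is therefore to make the cross term explicit: using $q_0 = 1-3s$, $q_1=q_2=q_3=s$, one gets
\begin{equation}\label{eqn:crossterm}
B_{ij}(q,\vec v) = \tfrac12\Big[(1-3s)\,v_i + s\,v_0 - s\,v_j - s\,v_k\Big],
\end{equation}
for $\vec v = \vec p - q$, i.e. $\vec v = (p_0 - (1-3s),\, p_1 - s,\, p_2-s,\,p_3-s)$; since $\sum_\ell v_\ell = 0$ one can rewrite $v_0 = -v_i - v_j - v_k$ and simplify. I would then substitute back and show that $g_{ij}(\tau)\ge \min\{(1-\tau)g_{ij}(0),\ldots\}$ type bound fails only in a regime excluded by $f_{ij}(\vec p)\ge 0$ together with the simplex constraints $p_\ell\ge 0$; concretely, bound $g_{ij}(\tau)$ below by a convex combination of $g_{ij}(0)$, $g_{ij}(1)$ and a manifestly non-negative remainder, using that $s\le 1/4$ and that $p_0$ dominates (a fact the text already notes follows from the semigroup conditions). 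The main obstacle I anticipate is precisely organizing this last estimate uniformly in the three index choices and in $s\in[0,1/4]$: the cross term \eqref{eqn:crossterm} is not sign-definite by itself, so the non-negativity of $g_{ij}$ has to be extracted from a genuine interplay between the two endpoint conditions rather than termwise. A clean way to finish, which I would try first, is to show the leading coefficient $f_{ij}(\vec v)$ and the endpoint value $f_{ij}(\vec p)$ satisfy an inequality forcing the vertex of the parabola $g_{ij}$ to sit outside $(0,1)$ whenever it opens downward — equivalently, $g_{ij}'(1)\le 0$ would already do it, so I would compute $g_{ij}'(1) = 2f_{ij}(\vec p) - 2B_{ij}(q,\vec p)$ and aim to show $B_{ij}(q,\vec p) \le f_{ij}(\vec p) + \tfrac12 f_{ij}(q)$ or a similarly structured bound, reducing the whole proposition to one scalar inequality per index triple.
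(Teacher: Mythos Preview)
Your framework is sound, but there is a sign error and the closing step does not work. A quadratic that is non-negative at both endpoints of $[0,1]$ can dip below zero only when its leading coefficient is \emph{positive} (convex, opening upward), not negative; a downward-opening parabola lies above its secant and is automatically non-negative on the interval. So the dangerous case is $f_{ij}(\vec v)>0$, opposite to what you wrote.

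More importantly, the criterion $g_{ij}'(1)\le 0$ you propose to close the argument is simply false. With $q=(\tfrac58,\tfrac18,\tfrac18,\tfrac18)$ (i.e.\ $s=\tfrac18$) and $\vec p=(\tfrac58,\tfrac18,\tfrac{5}{24},\tfrac{1}{24})\in{\cal S}$ one computes $g_1(\tau)=\tfrac{1}{16}+\tfrac{\tau^2}{144}$, hence $g_1'(1)=\tfrac{1}{72}>0$. The companion criterion $g_{ij}'(0)\ge 0$ fails too: for $\vec p=(\tfrac12,0,\tfrac12,0)$ with the same $q$ one gets $g_1(\tau)=\tfrac{1}{16}(1-\tau)^2$, so $g_1'(0)=-\tfrac18<0$. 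Thus neither derivative bound pins the vertex outside $(0,1)$ uniformly, and the ``one scalar inequality per index triple'' you hope for has no single clean form. The obstacle you correctly anticipate is not overcome by the plan as stated.

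The paper avoids this by exploiting the explicit structure of $\partial{\cal S}$ rather than working with a generic $\vec p\in{\cal S}$. It parametrizes a generic boundary point as a product vector $(ab,\,a(1-b),\,(1-a)b,\,(1-a)(1-b))$ with $a,b\in[\tfrac12,1]$, writes the chord $\vec r=\gamma\,q+(1-\gamma)\,\vec p$, and verifies $r_0r_3\ge r_1r_2$ directly as an elementary inequality in the four scalars $(a,b,\beta,\gamma)$. Restricting to boundary points suffices for star-shapedness, and the product parametrization is precisely what makes the otherwise sign-indefinite cross term factor and become tractable; this is the missing idea in your approach.
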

\begin{proof}
	We will show, that the intervals connecting any boundary point of ${\cal 
	S}$ with an arbitrary point from the interval~\eqref{eqn:star-shaped-interval},
	represented by a bold interval in Fig. 1,  belongs to 
	${\cal S}$. 	
%	Let us consider one of inequalities defining $\cal S$, and define a set 
%	${\cal S}_1$ by $p_0 p_3 \geq p_1 p_2$, $0 \leq p_i  
%	\leq 1$ and $p_0 = \max_i p_i$. 
%	
	The boundary of the set ${\cal S}$ is solely formed by permutations of 
	product vectors, so to fix the attention we consider a part of the boundary given 
	by $(p q, p(1-q), (1-q) p, (1-p)(1-q))$, with $\frac12 \leq 
	p,q \leq 1$. For every point from the 
	interval~\eqref{eqn:star-shaped-interval} we consider a line to the above generic boundary point, which is parameterized as
	\begin{equation}
		\vec{r}=\gamma [\beta  (1,0,0,0) + (1-\beta)  \frac14(1,1,1,1)]
		+ (1-\gamma) (p q, p(1-q), (1-q) p, (1-p)(1-q)) \ \text{ for } \ 
		\beta,\gamma\in[0,1].
	\end{equation}
	With the help of elementary inequalities one can check that under our assumptions the condition $r_0 
	r_3 \geq r_1 r_2$ holds. Consequently, $\vec{r} \in {\cal S}$ for all $\beta,\gamma\in[0,1]$.
\end{proof}

%%%%%%%%%%%%%%%%%%%%%%%%%%%%%%%%%%%%%%%%%%%%%%%%%%%%%%%%%%%%%%%%%%%%%%%%%%%%%%%%
\subsection{Lindblad dynamics associated with Pauli semigroups}
%%%%%%%%%%%%%%%%%%%%%%%%%%%%%%%%%%%%%%%%%%%%%%%%%%%%%%%%%%%%%%%%%%%%%%%%%%%%%%%%
% % % 
%{\bf \color{red} Attention - to be changed the indices and names of semigroups}
 
Consider now a special example of the Pauli semigroup 
$\Lambda_s^z=\exp({\cal L}_z s)$ associated with a single jump operator
 $L_1=\sigma_z$. It describes the effects of decoherence
  as it gradually diminishes the off-diagonal entries of 
the density matrix. Define also two accompanying semigroups: 
$\Lambda_t^x=\exp({\cal L}_x t)$  describing decoherence with respect to the
basis along the  $x$ axis and   corresponding 
to the jump operator $L_1=\sigma_x$, The third Pauli semigroup, 
$\Lambda_u^y=\exp({\cal L}_y u)$ is defined analogously. Since the tensor squares 
of the Pauli matrices commute, $[\sigma_i\otimes \sigma_i, \sigma_j \otimes 
\sigma_j]=0$, the corresponding Lindblad generators commute as well, $[{\cal 
L}_i, {\cal L}_j]=0$ for $i,j=x,y,z$, and so do the semigroups. Therefore, the 
order of operations is not important and will not influence the following key 
observation concerning  a composition of two or three semigroups:
 
\medskip
\begin{proposition}
	For any choice of the times $(t,s)$ the map $\Lambda_s^z\Lambda_t^x$
	gives a Pauli channel for which the vector $\vec{p}$ has a product 
	structure (\ref{prod}). This map forms a part of the boundary of the set 
	$\cal S$. Two other parts of this boundary are obtained by the remaining 
	two-factor concatenations $\Lambda_s^z\Lambda_u^y$ and 
	$\Lambda_u^y\Lambda_t^x$.
\end{proposition}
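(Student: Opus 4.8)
The plan is to exploit the fact, already used above, that every single-direction generator is diagonal in the Pauli basis, so that a composition is evaluated simply by multiplying eigenvalues. First I would record the single-jump action: for $L_1=\sigma_k$ one has $\sigma_k^\dagger\sigma_k=\1$ and $\sigma_k\sigma_j\sigma_k=-\sigma_j$ for $j\notin\{0,k\}$, hence ${\cal L}_k(\sigma_0)={\cal L}_k(\sigma_k)=0$ while ${\cal L}_k(\sigma_j)=-2\sigma_j$ otherwise. Ordering the basis as $(\sigma_0,\sigma_x,\sigma_y,\sigma_z)$ this gives $\Lambda_s^z=\mathrm{diag}(1,e^{-2s},e^{-2s},1)$ and $\Lambda_t^x=\mathrm{diag}(1,1,e^{-2t},e^{-2t})$. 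Being diagonal these commute (consistently with $[{\cal L}_i,{\cal L}_j]=0$), so their product is the Pauli channel with Bloch eigenvalues $\vec\lambda=(\lambda_x,\lambda_y,\lambda_z)=(e^{-2s},\,e^{-2(s+t)},\,e^{-2t})$; the associated probabilities come out as manifest products of numbers in $[0,1]$ summing to one, so this is a bona fide Pauli channel.

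The key observation is then immediate: $\lambda_y=e^{-2(s+t)}=\lambda_x\lambda_z$, so in the labeling $(\lambda_1,\lambda_2,\lambda_3)=(\lambda_x,\lambda_y,\lambda_z)$ exactly the middle inequality of \eqref{lambda} is saturated, while the other two, $e^{-2t}\ge e^{-2(2s+t)}$ and $e^{-2s}\ge e^{-2(s+2t)}$, hold strictly for $s,t>0$. Via \eqref{lambp} this is the same as $p_0p_2=p_1p_3$, i.e. saturation of one of the inequalities \eqref{eqn:probab-bounds}, which is precisely the statement that $\Lambda_s^z\Lambda_t^x$ sits on the boundary of ${\cal S}$. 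To identify this boundary piece as a product surface I would plug $\vec\lambda$ into \eqref{lambp} and factor, obtaining $p_0=\tfrac14(1+e^{-2s})(1+e^{-2t})$ and, likewise, each of $p_1,p_2,p_3$ as a product of one factor $\tfrac12(1\pm e^{-2s})$ and one factor $\tfrac12(1\pm e^{-2t})$. With $a=\tfrac12(1+e^{-2s})$, $b=\tfrac12(1+e^{-2t})$ (so $a,b\in(\tfrac12,1]$) and $a'=1-a$, $b'=1-b$, this reads $\vec p=(ab,\,ab',\,a'b',\,a'b)$, which is the product vector $(a,a')\times(b,b')$ of \eqref{prod} up to a transposition of its two last entries --- a permutation of the components already anticipated in the text following \eqref{prod}. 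Finally, as $(s,t)$ ranges over $[0,\infty)^2$ the pair $(a,b)$ ranges over $(\tfrac12,1]^2$, so the family $\{\Lambda_s^z\Lambda_t^x\}$ covers (a dense subset of) the whole boundary fragment $p_0p_2=p_1p_3$.

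The remaining two concatenations require no new work. The three single-direction generators enter symmetrically under cyclic relabeling of $x,y,z$, and since $[{\cal L}_i,{\cal L}_j]=0$ only the unordered pair of directions matters, not the order of composition: the pair $\{i,j\}$ leaves the remaining index $k$ doubly damped, so $\lambda_k=\lambda_i\lambda_j$ and the inequality $\lambda_k\ge\lambda_i\lambda_j$ of \eqref{lambda} is saturated. Thus $\Lambda_s^z\Lambda_u^y$ gives $\lambda_x=\lambda_y\lambda_z$, i.e. $p_0p_1=p_2p_3$, and $\Lambda_u^y\Lambda_t^x$ gives $\lambda_z=\lambda_x\lambda_y$, i.e. $p_0p_3=p_1p_2$; repeating the factorization shows $\vec p$ is again a product vector, with its components permuted so as to saturate the corresponding equality in \eqref{eqn:probab-bounds}. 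These furnish the other two boundary surfaces.

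Frankly there is no real obstacle here: once Pauli-basis diagonality is invoked the whole thing is a one-line eigenvalue computation plus the factorization of \eqref{lambp}. The only points that demand a little care are clerical --- keeping the index labels consistent among \eqref{lambda}, \eqref{lambp} and \eqref{prod}, spotting which transposition of entries relates $\vec p$ to the canonical product form, and checking the range of $(a,b)$ so that one may legitimately claim the family of maps fills (the closure of) the stated piece of $\partial{\cal S}$ rather than merely meeting it at a point.
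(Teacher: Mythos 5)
Your proposal is correct and follows essentially the same route as the paper: exploit commutativity of the single-direction generators, diagonalize in the Pauli basis to get the composed eigenvalues $(e^{-2s},e^{-2(s+t)},e^{-2t})$, and read off the saturated relation $\lambda_2=\lambda_1\lambda_3$, i.e.\ $p_0p_2=p_1p_3$. Your explicit factorization $\vec p=(ab,ab',a'b',a'b)$ with $a=\tfrac12(1+e^{-2s})$, $b=\tfrac12(1+e^{-2t})$ is just the paper's $\cosh$/$\sinh$ parametrization in disguise (and in fact fixes a stray factor of $2$ in the paper's formulas), with the added useful remark that $(a,b)$ sweeps out $(\tfrac12,1]^2$ so the family fills the whole boundary patch.
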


\begin{proof}
	Using the fact that tensor squares of Pauli matrices do commute, we arrive 
	at a simple result
	\begin{equation}
		\Lambda_s^z  \Lambda_t^x = O_4^\top \mathrm{diag}(1,e^{-2 s}, 
		e^{-2 (s+t)}, e^{-2 t})O_4. 
	\end{equation}
	Therefore $\Lambda_s^z \Lambda_t^x$ is a Pauli channel for which the 
	corresponding probability vector:
	\begin{equation}
		\begin{split}
		p_0 &=  2 e^{-(s+t)}  \cosh (s) \cosh (t), \\
		p_1 &=  2 e^{-(s+t)}  \cosh (s) \sinh (t), \\
		p_2 &=  2 e^{-(s+t)}  \sinh (s) \sinh (t), \\
		p_3 &=  2 e^{-(s+t)}  \sinh (s) \cosh (t),
		\end{split}
	\end{equation}
	 enjoys the product structure $p_0p_2=p_1p_3$.
	 Other two conditions  for a product vector $p$  correspond to remaining 
	  choices of two generators and  two other parts of the boundary of  $\cal S$.
\end{proof}

\begin{proposition}
	For any point $\vec{p}$ in the interior of the set $\cal S$ of quantum 
	operations belonging to the semigroup, 	there exist a triple ($s,t,u>0$)
	such that the corresponding Pauli channel is given by a composition 	
	$\Phi_{\vec{p}} = \Lambda_s^z\Lambda_t^x\Lambda_u^y$.
\end{proposition}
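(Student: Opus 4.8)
The plan is to reduce the statement to an explicit change of variables and then check that the three product-type inequalities defining $\cal S$ are exactly what guarantees solvability with positive times. From the previous proposition we already know that a composition of the three commuting Pauli semigroups is again a Pauli channel whose diagonalized distortion matrix is
\begin{equation}
\mathrm{diag}\bigl(1,\ e^{-2(s+u)},\ e^{-2(t+u)},\ e^{-2(s+t)}\bigr),
\end{equation}
so that $\lambda_1=e^{-2(s+u)}$, $\lambda_2=e^{-2(t+u)}$, $\lambda_3=e^{-2(s+t)}$ (with the ordering of which generator hits which eigenvalue fixed once and for all). First I would invert this map: taking logarithms gives a linear system for the pairwise sums $s+u$, $t+u$, $s+t$, whence
\begin{equation}
s=\tfrac14\log\frac{\lambda_1\lambda_3}{\lambda_2},\qquad
t=\tfrac14\log\frac{\lambda_2\lambda_3}{\lambda_1},\qquad
u=\tfrac14\log\frac{\lambda_1\lambda_2}{\lambda_3}.
\end{equation}
So the triple $(s,t,u)$ exists and is unique, and it is strictly positive if and only if $\lambda_1\lambda_3>\lambda_2$, $\lambda_2\lambda_3>\lambda_1$, and $\lambda_1\lambda_2>\lambda_3$ — which are precisely the strict versions of the conditions \eqref{lambda} cutting out $\cal S$. (One also needs $\lambda_i>0$ for the logarithms to make sense; I would note that on the interior of $\cal S$ all three eigenvalues are automatically positive, since e.g. $\lambda_1\ge\lambda_2\lambda_3$ together with a vanishing or negative $\lambda_1$ would force another eigenvalue to be nonpositive, pushing the point to the boundary or outside.)

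The key steps in order: (i) recall from Proposition 3 the diagonal form of $\Lambda_s^z\Lambda_t^x\Lambda_u^y$ and read off the three eigenvalues as functions of $(s,t,u)$; (ii) solve the resulting triangular linear system in the logarithms to obtain the displayed formulas for $s,t,u$; (iii) observe that $s,t,u>0$ is equivalent to the three strict inequalities $\lambda_1\lambda_2>\lambda_3$ and cyclic, i.e. to $\vec p$ lying in the open set bounded by the product-vector surfaces; (iv) translate back, using \eqref{lambp}, to the probability-vector inequalities \eqref{eqn:probab-bounds} so that ``interior of $\cal S$'' is literally ``all three product inequalities strict''; (v) conclude that for every interior point the constructed $(s,t,u)$ reproduces $\Phi_{\vec p}$, using that commuting generators make the ordering irrelevant and that a Pauli channel is determined by its eigenvalue triple.

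I do not expect a serious obstacle here; the content is essentially that the exponential map from $(s,t,u)\in(0,\infty)^3$ onto the interior of $\cal S$ is a bijection, and everything is explicit. The one point that deserves a clean argument rather than hand-waving is step (iii)–(iv): one must confirm that the interior of $\cal S$, as defined by \eqref{eqn:probab-bounds}, coincides with the region where all three inequalities \eqref{lambda} hold strictly \emph{and} $\lambda_i>0$, so that the logarithms are real and the times are positive. This is the part I would write out carefully — showing positivity of the eigenvalues is forced on the interior — whereas the algebra of inverting the linear system is routine and can be stated with minimal commentary.
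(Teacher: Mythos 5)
Your strategy is exactly the paper's: write $\Lambda_s^z\Lambda_t^x\Lambda_u^y$ in the common eigenbasis, read off the eigenvalues as $e^{-2(\cdot)}$ of pairwise sums of the times, and invert. However, your inversion has a sign error that propagates into a false claim. With your labelling $\lambda_1=e^{-2(s+u)}$, $\lambda_2=e^{-2(t+u)}$, $\lambda_3=e^{-2(s+t)}$, one has $s+u=-\tfrac12\log\lambda_1$ etc., and
\begin{equation}
2s=(s+u)+(s+t)-(t+u)=\tfrac12\log\frac{\lambda_2}{\lambda_1\lambda_3},
\end{equation}
so $s=\tfrac14\log\bigl[\lambda_2/(\lambda_1\lambda_3)\bigr]$, i.e.\ the \emph{reciprocal} of what you wrote, and similarly for $t$ and $u$ (these match the paper's $\tfrac12\log\sqrt{\cdot}$ expressions up to the permutation of labels). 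Consequently $s,t,u>0$ is equivalent to $\lambda_2>\lambda_1\lambda_3$, $\lambda_1>\lambda_2\lambda_3$, $\lambda_3>\lambda_1\lambda_2$ — the strict versions of (\ref{lambda}) — whereas your stated criterion $\lambda_1\lambda_3>\lambda_2$ and cyclic is the reversal of those inequalities; multiplying your three conditions gives $\lambda_1\lambda_2\lambda_3>1$, which is impossible for eigenvalues in $(0,1)$, so as written your formulas would produce \emph{negative} times on all of the interior of $\cal S$. The fix is purely mechanical, and once the signs are corrected the argument is the paper's proof verbatim.

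One genuinely useful addition on your part: you flag that one must check $\lambda_i>0$ on the interior of $\cal S$ for the logarithms to be real, and sketch why a nonpositive eigenvalue forces the point out of the (strict) region. The paper handles this only implicitly, by having imposed $\lambda_{\min}\ge 0$ earlier in the section, so spelling out that positivity is automatic for interior points is a worthwhile clarification — provided you first repair the inverted formulas.
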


\begin{proof}
	For a given probability vector $\vec{p} \in \cal S$	we resort to \eqref{lambp} in order to find its accompanying distortion vector $\vec 
	\lambda$. In the next step we set:
	\begin{equation}
		\begin{split}
		s = \frac12 \log \sqrt{\frac{\lambda_3}{\lambda_1 \lambda_2}},\\
		t = \frac12 \log \sqrt{\frac{\lambda_1}{\lambda_2 \lambda_3}},\\
		u = \frac12 \log \sqrt{\frac{\lambda_2}{\lambda_1 \lambda_3}}.
		\end{split}
	\end{equation}
	For points in the interior, relations (\ref{lambda}) are satisfied as 
	strict inequalities, the arguments of the logarithms 
	are greater than unity so that the times $s,t,u$ are positive.
	
	Now utilizing the trivial commutation relation, we get 
	\begin{equation}
		\Lambda_s^z \Lambda_t^x  \Lambda_u^y = 
		O_4^\top \mathrm{diag}(1,e^{-2 (s+u)}, e^{-2 (s+t)}, e^{-2 (t + u)})O_4,
	\end{equation}
	which gives $\Phi_{\vec{p}}$. Note that the entries of the diagonal matrix
	reveal time dependece of the vector $\lambda$.
\end{proof}

%%%%%%%%%%%%%%%%%%%%%%%%%%%%%%%%%%%%%%%%%%%%%%%%%%%%%%%%%%%%%%%%%%%%%%%%%%%%%%%%
\subsection{Dynamical semigroups and unistochastic maps}
%%%%%%%%%%%%%%%%%%%%%%%%%%%%%%%%%%%%%%%%%%%%%%%%%%%%%%%%%%%%%%%%%%%%%%%%%%%%%%%%

Let us recall here that a unital quantum operation acting on an $N$ dimensional 
system and determined by a unitary matrix  $U$ of order $N^2$, which describes 
the coupling of the system with environment initially in the maximally mixed 
state followed by the partial trace over the environment $B$, 
\begin{equation}\label{unist}
	\rho' =\Phi_U(\rho) =  \tr_B \Bigl( U(\rho \otimes \frac{{\mathbbm I}_N}{N})
	U^\dagger \Bigr),
\end{equation}
is called {\sl unistochastic} \cite{ZB04}. It is known that the set of 
one-qubit unistochastic maps forms a non-convex proper subset of the 
tetrahedron of Pauli channels~\cite{MKZ13} which is bounded by the manifold of 
probability vectors with a tensor product structure as shown in Fig. 1.  
Therefore, the set of Pauli channels belonging to a semigroup forms a quarter 
of the set of unistochastic maps which contains the identity operation. More 
formally, we arrive at the following statement.

\begin{proposition} \label{prop:unitochastic-in-semigroup}
	Any bistochastic one qubit Pauli channel $\Phi : \rho \mapsto \sum_{i=0}^3 
	p_i \sigma_i \rho \sigma_i^\dagger$ belongs to a dynamical semigroup, and 
	can be written in the Lindblad form, $\Lambda_t = e^{\mathcal{L} t}$, if and only if 
	$\Lambda_t$ is unistochastic and the identity component $p_0$ is the 
	largest component of the probability vector $\vec{p} = (p_0,p_1,p_2,p_3)$.
\end{proposition}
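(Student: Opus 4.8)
The plan is to assemble the proposition from ingredients already established in the preceding subsections, so the proof is essentially a synthesis rather than a new computation. First I would recall that, by the discussion around Eq.~\eqref{lambda} and Eq.~\eqref{eqn:probab-bounds}, a bistochastic Pauli channel $\Phi_{\vec p}$ admits a GKLS generator $\mathcal{L}$ with $\Phi_{\vec p}=e^{\mathcal{L}}$ and $\Lambda_t=e^{\mathcal{L}t}$ completely positive for all $t\ge 0$ precisely when the three product inequalities $p_0p_3\ge p_1p_2$, $p_0p_2\ge p_1p_3$, $p_0p_1\ge p_2p_3$ hold, i.e. when $\vec p\in\mathcal{S}$; and that these are exactly the conditions forcing $p_0$ to be the largest entry of $\vec p$ (as noted right after Eq.~\eqref{eqn:probab-bounds} and in the Pauli-semigroup subsection). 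This already gives one direction: membership in a dynamical semigroup of the stated form $\Leftrightarrow$ $\vec p\in\mathcal{S}$ $\Leftrightarrow$ $p_0$ maximal.

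Next I would bring in the unistochastic characterization. By the result of~\cite{MKZ13} quoted in this subsection, the one-qubit unistochastic maps form exactly the subset of the Pauli tetrahedron bounded by the hyperboloid of product probability vectors; unwinding which side of that ruled surface one lands on, the unistochastic region that contains the identity is precisely the set cut out by $p_0p_3\ge p_1p_2$ together with its two permutations — which is $\mathcal{S}$ itself. Hence $\Lambda_t$ (for every $t$, including $t=1$, where $\Lambda_1=\Phi_{\vec p}$) is unistochastic iff $\vec p\in\mathcal{S}$. Combining the two equivalences closes the loop: $\Phi$ belongs to a dynamical semigroup of Lindblad form $\iff \vec p\in\mathcal{S}\iff p_0$ is the largest component $\iff \Lambda_t$ is unistochastic with $p_0$ maximal, which is exactly the asserted biconditional. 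I would also remark that along the semigroup $\Lambda_t=\Phi(\lambda_i^t,0)$ by Eq.~\eqref{kappan}, and the time-dependent distortion vector $(\lambda_1^t,\lambda_2^t,\lambda_3^t)$ continues to satisfy~\eqref{lambda}, so the whole trajectory stays in $\mathcal{S}$ and hence remains unistochastic, not merely its endpoint.

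The step I expect to be the genuine obstacle is pinning down that the "identity-containing quarter" of the unistochastic region coincides set-theoretically with $\mathcal{S}$, rather than merely being contained in it or containing it. The bounding manifold from~\cite{MKZ13} is the union of the three hyperboloid sheets $p_ip_j=p_kp_l$, and one must verify that, on the identity side, each sheet bounds the unistochastic set in the same sense ($\ge$ vs.\ $\le$) that it bounds $\mathcal{S}$; a clean way is to exhibit an explicit order-four unitary $U$ in~\eqref{unist} realizing the generic product vector~\eqref{prod}, e.g.\ built from a controlled rotation whose angles are set by $a$ and $b$, so that the entire boundary of $\mathcal{S}$ is attained by unistochastic maps, and then use connectedness of $\mathcal{S}$ together with the fact that $\mathbbm 1$ and $\Phi_*$ are plainly unistochastic to conclude equality. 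I would flag that the non-convexity of the unistochastic set is what makes this "which side" bookkeeping necessary; everything else is the arithmetic-geometric-type inequality manipulation already used in the star-shapedness proof.
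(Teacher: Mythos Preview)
Your synthesis is in the right spirit, but the logical chain you write down is not quite right, and the paper proceeds by a different and more direct route.

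First, the slip: you assert ``$\vec p\in\mathcal{S}\Leftrightarrow p_0$ maximal'' as a free-standing equivalence. Only the forward implication holds; the example $\vec p=(0.4,0.3,0.2,0.1)$ has $p_0$ maximal but $p_0p_3<p_1p_2$, so it lies outside $\mathcal{S}$. Likewise ``$\Lambda_t$ is unistochastic iff $\vec p\in\mathcal{S}$'' is false as stated, since the unistochastic region consists of four symmetry-related pieces and $\mathcal{S}$ is only the one attached to the identity vertex. The correct statement you need is the single equivalence $\mathcal{S}=\{\text{unistochastic}\}\cap\{p_0\text{ maximal}\}$, and then semigroup $\Leftrightarrow \vec p\in\mathcal{S}$ finishes the proof. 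Your final paragraph does isolate this as the key point, so the architecture is salvageable, but the intermediate biconditionals should be dropped.

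Second, the method: the paper does not argue set-theoretically by invoking the boundary description from~\cite{MKZ13} and then doing the ``which side'' bookkeeping you (rightly) flag as the obstacle. Instead it gives a constructive argument via the Cartan canonical form $U\sim\exp\bigl(i\sum_j\alpha_j\,\sigma_j\otimes\sigma_j\bigr)$ of a two-qubit unitary. For such $U$ one computes directly that the superoperator $\Phi_U$ is a Pauli channel with eigenvalues $\lambda_1=\cos 2\alpha_2\cos 2\alpha_3$, $\lambda_2=\cos 2\alpha_1\cos 2\alpha_3$, $\lambda_3=\cos 2\alpha_1\cos 2\alpha_2$, and these visibly satisfy~\eqref{lambda} since $\lambda_j\lambda_k/\lambda_i=\cos^2 2\alpha_i\le 1$. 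Conversely, given any $\vec\lambda$ obeying~\eqref{lambda}, one sets $\alpha_i=\tfrac12\arccos\sqrt{\lambda_j\lambda_k/\lambda_i}$ and recovers an explicit unitary $U$ with $\Phi_U=\Phi_{\vec p}$. This bypasses entirely the connectedness and boundary-realization argument you sketch: both inclusions between $\mathcal{S}$ and the identity-containing unistochastic piece come out of the same parametrization, and the ``$p_0$ maximal'' condition is encoded by the Weyl-chamber restriction on the $\alpha_j$, which forces all $\lambda_i\ge 0$. Your approach would work once the equivalences are stated correctly, but the Cartan computation is shorter, self-contained, and delivers the realizing unitary for free.
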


To shed some light on the above result we will make use of the notion of 
locally equivalent gates and apply the canonical form of a two-qubit quantum 
gate. Two unitary matrices $U$ and $V$ of order four, are called {\sl locally 
equivalent}, if there exist four unitary matrices $W_i$  of order two such that
\begin{equation}\label{eqn:unitary-equivalence}
	V\sim U=(W_1\otimes W_2)  V (W_3 \otimes W_4).
\end{equation}  
Observe that  unistochastic maps $\Phi_U$ and $\Phi_V$, generated by two locally equivalent 
unitary matrices, $U \sim V$, lead to unitarily equivalent channels
(\ref{UV_equiv}).

Any unitary matrix $U\in U(4)$  is locally equivalent to a two--qubit gate 
written in the canonical Cartan form~\cite{KBG01,KC01},
\begin{equation}
\label{cartan}
	U \sim V = e^{i \sum_{j=1}^3 \alpha_j \sigma_j  \otimes \sigma_j},
\end{equation} 
and the vector of $(\alpha_1,\alpha_2, \alpha_3)$ called {\sl information 
content} can be chosen from the Weyl chamber, $\frac{\pi}{4}\geq \alpha_1 \geq 
\alpha_2\geq |\alpha_3 |\geq~0$.
 Thus, we are going to restrict our attention to the unistochastic maps $\Phi_U$
 corresponding to unitary matrices in the above
  Cartan form. 

Consider a unistochastic channel $\Phi_U$  determined by a unitary $U\in U(4)$. 
The corresponding Choi matrix can be written with use of the reshuffled matrix,
$D_U = \frac12 U^R (U^R)^{\dagger}$, so that the superoperator 
reads~\cite{MKZ13}, $\Phi_U = \frac12 [U^R (U^R)^{\dagger}]^R$. Taking $U$ in 
the form Eq. (\ref{cartan}) we arrive at the following expression for the 
superoperator 
\begin{equation}
	\Phi_U = O_4^\top	
	\left(
	\begin{smallmatrix}
	1 & 0 & 0 & 0 \\
	0 & \cos \left(2 \alpha _2\right) \cos \left(2 \alpha _3\right) & 0 & 0 \\
	0 & 0 & \cos \left(2 \alpha _1\right) \cos \left(2 \alpha _3\right) & 0 \\
	0 & 0 & 0 & \cos \left(2 \alpha _1\right) \cos \left(2 \alpha _2\right) \\
	\end{smallmatrix}
	\right) O_4.
\end{equation}
Since the eigenvalues of $\Phi_U$ do satisfy~\eqref{lambda} this channel 
belongs to a semigroup. On the other hand for a qubit unital channel with 
positive eigenvalues, satisfying constraints~\eqref{lambda}, we can find 
unitary matrix $U$, such that the channel can be written as $\Phi_U$. It is 
enough to take the angles $\alpha_1,\alpha_2,\alpha_3$
\begin{equation}
	\begin{split}
	\alpha_1 = \frac12 \arccos(\sqrt{\frac{\lambda_2\lambda_3}{\lambda_1}}),\\
	\alpha_2 = \frac12 \arccos(\sqrt{\frac{\lambda_1\lambda_3}{\lambda_2}}),\\
	\alpha_3 = \frac12 \arccos(\sqrt{\frac{\lambda_1\lambda_2}{\lambda_3}}),
	\end{split}
\end{equation}
as due to relations (\ref{lambda}) the arguments of the square roots do not 
exceed the unity. In this way we have shown that every unistochastic channel 
(up to a unitary rotation) belongs to a semigroup. This concludes a constructive 
proof of Proposition~\ref{prop:unitochastic-in-semigroup}.

\section{Concluding remarks}
%%%%%%%%%%%%%%%%%%%%%%%%%%%%%%%%%%%%%%%%%%%%%%%%%%%%%%%%%%%%%%%%%%%%%%%%%%%%%%%%

The results of the present work, motivated by the similarity equivalence (gauge) inherent to gate set tomography, touched upon the general case of non unital quantum maps of arbitrary dimension and studied in more detail the set ${\cal S}$ of single-qubit quantum Pauli channels which belong to a semigroup. A conceptual novelty of the results is that the Pauli channels are shown to be a handy way of encoding relevant features of qubit channels.

To synthetically summarize the findings of geometrical nature: the set ${\cal S}$ forms a subset of a quarter of 
the tetrahedron of Pauli channels bounded by the surfaces consisting of 4-point
probability vectors with the tensor product structure. 
Altough this set  is not convex, it is star-shaped. Since this set contains 
unistochastic channels~\cite{MKZ13}, which correspond to the coupling with a 
one-qubit environment initially in the maximally mixed state~\cite{ZB04}, we  conclude that every map accessible through the  continuous semigroup is 
unitarily equivalent to a unistochastic channel. 

Note that for any quantum operation acting on quantum states of size $N$
one can find the corresponding classical map represented by a stochastic 
transition matrix of order $N$. This matrix is obtained by reshaping the 
diagonal of the  Choi matrix, which may be interpreted as a result of the 
decoherence acting in the space of quantum maps \cite{KCPZ18,KCPZ19}.
Hence the problem studied in this work can be considered as a quantum analogue 
of the question,  which bistochastic matrix allows for a continuous dynamics, 
$B=\exp({\cal L} t)$, such that the trajectory does not leave the Birkhoff 
polytope ${\cal B}_N$ of bistochastic matrices \cite{BC16}. 
Although  for $N=2$ this 
question is fairly easy, [the answer is $B_a=(1-a,a;\ a,1-a)$ with $a\le 1/2$ 
represented by interval $[I_c, \Phi_*]$ in Fig. 2], already for $N=3$ the 
problem becomes interesting \cite{HL11,SZ18}. 
In fact one can study also a simpler 
question, asking for the set of bistochastic matrices, such that its square 
root (or roots of a higher order) are bistochastic and formulate the 
corresponding quantum problem,  looking for bistochastic operations such that 
its square root  (or higher order roots) forms a bistochastic operations 
\cite{HL11,Sn14}. 
Any divisible map, which belongs to a semigroup, is included 
inside this set.

It will be a challenge to find out, which of the above results can be extended 
to a more general class of systems. In particular, in the case of unital quantum channels 
 it is clear that the set of the bistochastic maps acting on ${\cal 
H}_N$, which belong to a semigroup, includes the maps corresponding to the 
action of bistochastic matrices of size $N$, which belong to classical 
semigroups ~\cite{BC16,SZ18}.

\medskip

We are grateful to  Fabio Benatti, Dariusz Chru{\'s}ci{\'n}ski and 
Fereshte Shahbeigi for numerous stimulating discussions on divisibility and quantum semigroups.  
It is a great pleasure to thank David Davalos and Carlos Pineda for fruitful 
discussions, which allowed us to understand similarities and differences between
both approaches to the problem.  K{\.Z} is grateful to Chryssomalis 
Chryssomalakos for the invitation to Mexico, which made this interaction 
possible. Financial support by the Polish National Science Centre (NCN) under 
the grant numbers DEC-2015/18/A/ST2/00274 (K.{\.Z}); 2016/22/E/ST6/00062 
(Z.P.) is gratefully acknowledged.

\end{document}